\documentclass[letterpaper, 10 pt, conference]{ieeeconf}

\IEEEoverridecommandlockouts                              % This command is only needed if you want to use the \thanks command

\usepackage{cite}        % required for bibliography
\usepackage[pdftex]{graphicx}
\usepackage{amsmath,amssymb,epsfig}
\usepackage{enumerate,url,wasysym,balance}
\usepackage{multirow}
\usepackage{xcolor}
\usepackage{algorithm}
\usepackage{algpseudocode}

\usepackage{balance}

\newcommand \mcD{\mathcal{D}}
\newcommand \mcL{\mathcal{L}}
\newcommand \mcN{\mathcal{N}}

\newcommand \mcX{\mathcal{X}}
\newcommand \mcY{\mathcal{Y}}
\newcommand \mcZ{\mathcal{Z}}

\newcommand \E{\mathop{\mathbb{E}}}

\newcommand \1{\mathbf{1}}

\newcommand \pre{\text{pre}}
\newcommand \pref{p_{\text{ref}}}
\newcommand \eq{\text{eq}}

\newcommand \bB{\mathbf{B}}
\newcommand \bI{\mathbf{I}}

\newcommand \bd{\mathbf{d}}
\newcommand \bg{\mathbf{g}}
\newcommand \br{\mathbf{r}}
\newcommand \bs{\mathbf{s}}
\newcommand \bw{\mathbf{w}}
\newcommand \bx{\mathbf{x}}

\newcommand \bz{\mathbf{z}}

\newcommand \bbeta{\boldsymbol{\beta}}
\newcommand \bdelta{\boldsymbol{\delta}}

\newtheorem{proposition}{Proposition}
\newtheorem{remark}{Remark}
\newtheorem{assumption}{Assumption}

\title{\LARGE \bf
Demand Response Under Stochastic, Price-Dependent User Behavior}

\author{Guido Cavraro \and Andrey Bernstein \and Emiliano Dall'Anese
\thanks{G. Cavraro and A. Bernstein are  with the Power Systems Engineering Center, National Renewable Energy Laboratory. email: {\tt guido.cavraro@nrel.gov, andrey.bernstein@nrel.gov}. E. Dall'Anese is  ...}
\thanks{This work was authored by the National Renewable Energy Laboratory, operated by Alliance for Sustainable Energy, LLC, for the U.S. Department of Energy (DOE) under Contract No. DE-AC36-08GO28308. Funding provided by the NREL Laboratory Directed Research and Development Program. The views expressed in the article do not necessarily represent the views of the DOE or the U.S. Government. The U.S. Government retains and the publisher, by accepting the article for publication, acknowledges that the U.S. Government retains a nonexclusive, paid-up, irrevocable, worldwide license to publish or reproduce the published form of this work, or allow others to do so, for U.S. Government purposes.}
}

\begin{document}

\maketitle
\thispagestyle{empty}
\pagestyle{empty}

%%%%%%%%%%%%%%%%%%%%%%%%%%%%%%%%%%%%%%%%%%%%%%%%%%%%%%%%%%%%%%%%%%%%%%%%%%%%%%%%
\begin{abstract}
This paper focuses on price-based residential demand response (DR) implemented through dynamic adjustments of electricity prices during DR events. It extends existing DR models to a stochastic framework in which customer response is represented by price-dependent random variables, leveraging models and tools from the theory of stochastic optimization with decision-dependent distributions. The inherent epistemic uncertainty in the customers' responses renders open-loop, model-based DR strategies impractical. To address this challenge, the paper proposes to
%a twofold approach: (i) use data from prior DR events to estimate the average price elasticity of demand, even if the estimates are coarse or imprecise; and (ii)
employ stochastic, feedback-based pricing strategies to compensate for estimation errors and uncertainty in customer response. The paper then establishes theoretical results demonstrating the stability and near-optimality of the proposed approach and validates its effectiveness through numerical simulations.
\end{abstract}

\section{Introduction}

%{\color{red} ED: Guys, the symbol $r_n$ is confusing, can we just remove it? We can say that $d_n$ is the net power. It is odd to say that $d_n$ can be changed but $r_n$ is fixed.} \\ 

%\ab{I agree, let's remove $r_n$ and fold it into $\hat{d}_n$} \\

%\ab{Should we use ``customers/consumers'' instead of ``users''?}

%{\color{red} ED: Yes, sure. Let's use customers.}

With the growing integration of diverse energy resources across the power grid and the rapid electrification of multiple sectors, maintaining a reliable electricity supply has become increasingly challenging for utilities~\cite{NERC2024}. Residential demand response (DR) has long been a key instrument for system operators (SOs) to manage extreme operating conditions, including peak load events and voltage violations \cite{HAIDER2016166}.

DR strategies vary in both temporal resolution and control mechanisms, and are generally categorized into \emph{direct} and \emph{indirect} load control. In direct load control, SOs assume full authority over consumer loads during specified periods, modifying consumption patterns in real time to address system-level contingencies. In contrast, indirect load control leverages economic incentives---either through dynamic pricing or predefined monetary rewards---to encourage consumers to voluntarily adjust their electricity usage. See, e.g., \cite{SILVA2022100857}, for a comprehensive survey.

Despite their potential, the real-world effectiveness of DR programs remains limited. Direct control schemes often face low participation rates, as they may compromise user comfort or convenience, requiring customers to accept reduced autonomy over their energy use \cite{STENNER201776}. Indirect control strategies, on the other hand, rely heavily on accurate modeling of consumer behavior in response to pricing or incentives. Inadequately designed price signals can inadvertently cause adverse effects, such as load synchronization or system instability, undermining the reliability they are intended to support \cite{Nazir2019}.

This paper focuses on indirect load control, precisely on \emph{real-time price-based} DR. Specifically, the SO dynamically modifies electricity prices during DR events. The classical approach to modeling user response assumes a deterministic relationship between price and consumption, often grounded in the assumption of rational user behavior (e.g.,~\cite{Cavraro2024Incentives,PANDEY2022107597}). However, such assumptions are overly simplistic and fail to capture the inherent variability and uncertainty in human decision-making.

A common extension of deterministic models introduces stochasticity into the user response---typically by adding fixed noise terms~\cite{ReissHousehold2005}. Yet, this approach does not account for the fact that consumer responsiveness is inherently price-dependent: as prices increase, a stronger reduction in demand is generally expected. To address this limitation, we propose a novel framework that models user behavior using price-dependent random variables, drawing on tools from the theory of stochastic optimization with decision-dependent distributions; see the models and algorithms in~\cite{pmlr-v119-perdomo20a,drusvyatskiy2023stochastic,lin2023plug,miller2021outside,wood2021online,Wood2023SaddlePoint}, the more recent works in~\cite{wood2024solving,he2025decision}, and the survey~\cite{hardt2025performative}.

One of the key challenges in implementing indirect, price-based DR strategies lies in the epistemic uncertainty of user response. Specifically, accurately estimating residential price elasticity is difficult because behavioral responses are small, confounded by many factors, and the data (price variation and the corresponding consumption change) are limited~\cite{FELL201437,ReissHousehold2005}. Moreover, factors such as number of household members,  income, and overall financial situation affect the sensitivity to prices~\cite{PEERSMAN2024107421}. Finally, different studies provide different sensitivity results~\cite{ALBERINI2011870}. This inherent epistemic uncertainty makes traditional open-loop, model-based approaches infeasible in practice. Therefore, this paper advocates a 
%two-fold approach: (i) use data from prior DR events to estimate \emph{average} price elasticity of demand (even if inaccurate/rough); and (ii) employ 
stochastic \emph{feedback-based} pricing strategies to compensate for the uncertainty in user response and ensure robust performance.

This paper is structured as follows. In Section \ref{sec:model}, we present the power grid and DR models. In Section \ref{sec:dr}, we formulate the corresponding stochastic optimization problem. In Section \ref{sec:control}, we propose our feedback-based DR strategy and 
establish theoretical results demonstrating its  stability and near-optimality. In Section \ref{sec:num}, we validate its practical effectiveness through numerical simulations. Finally, we conclude in Section \ref{sec:conc}.

%Estimating residential energy price elasticity is difficult because behavioral responses are small~\cite{FELL201437,ReissHousehold2005}, confounded by many factors, and the data (price variation and consumption detail) are limited. 
%Also, prices sensitivities are different among households. Evidence shows that factors like number of house members,  income, overall financial situation affect the sensitivity to prices~\cite{PEERSMAN2024107421}.

%Different studies provide different sensitivity results, e.g., see~\cite{ALBERINI2011870}

%Stochastic models are already employed. Nevertheless, stochasticity is considered non variable~\cite{ReissHousehold2005}

\emph{Notation:} Lower- (upper-) case boldface letters denote
column vectors (matrices).
Given a set $\mcX$ and a vector $\bx$, the projection of $\bx$ onto $\mcX$ is $[\bx]_\mcX$. For a a vector $\bx$, $\|\bx\| := \sqrt{\bx^\top \bx}$. The symbol $\1$ represent the vector of all ones, whereas $\bI$ represent the identity matrix, both of suitable dimension.
$\E_{w \sim \mcD}$ denotes the expected value with respect to random variable $w$ that is distributed according to distribution~$\mcD$.

\section{Model and Demand Response Events} \label{sec:model}

The paper focuses on aggregations of electricity customers participating in DR programs. While the proposed methods are applicable across different geographical scales (e.g., one or multiple distribution circuits), for clarity and concreteness we develop the technical narrative around a specific case involving a feeder. Consider then a distribution feeder with $N+1$ nodes, collected in the set $\mcN = \{0, 1, \dots, N\}$. The substation is labeled as $0$. 
Each of the nodes $n = 1, \ldots, N$ serves a group of electricity users located in close geographical proximity, such as those within a specific neighborhood (and served by the same distribution transformer or connected to the same lateral). The aggregate load at node $n$ is denoted as $d_n$; on the other hand, distributed generation, assume to be uncontrollable, at the same node is denoted as $r_n$. Users are billed for their electricity consumption under a net energy metering (NEM) tariff structure~\cite{Cavraro2024Incentives,Alahmed2024}. Accordingly, we model the total payment collected from all users served at node \( n \) as
\begin{equation}
    \gamma_n(d_n) \;=\; \pi \,(d_n - r_n) \;+\; \omega_n ,
\end{equation}
where \( \pi > 0 \) denotes the retail electricity rate and \( \omega_n \) represents non-volumetric surcharges, such as fixed connection or service fees. Here, we emphasize the dependence of $\gamma_n(d_n)$ on $d_n$, since we will decouple $d_n$ into non-dispatchable  and flexible loads. 

The utility may initiate a \emph{demand response event} (DRE), during which the electricity price is adjusted at each node for users enrolled in the program. Before the DRE, the demand $d_n^{\pre}$ (where the superscript ${\pre}$ indicates that we are considering $d_n$ prior to the DRE) is partitioned into an inflexible demand $\hat d_n$ (e.g., loads from users that will not be adjusted during the DRE, such as refrigerators, etc.) and a pre-event flexible load, whose aggregate consumption is denoted by $\delta_n$ (e.g., dispatchable loads such as washing machines, dryers, EV chargers, or HVAC systems). Overall, before the event,
\[
d_n^{\pre} = \hat d_n + \delta_n - r_n .
\]
Turning now to the DRE, let $x_n$ denote the price adjustment at node $n$ introduced to induce users to lower their power consumption. Accordingly, the cost incurred by a user at node $n$ participating in the DR program is given by
\begin{align*}
    c_n(w_n) 
    &= (\pi + x_n)\big(d_n - r_n\big) + \omega \\
    &= (\pi + x_n)\big(\hat d_n + w_n - r_n\big) + \omega ,
\end{align*}
where $d_n$ has been expressed as
\begin{equation}
\label{eq:power_dem_n}
d_n = \hat d_n + w_n,
\end{equation}
with $w_n \in [0, \delta_n]$ representing the deviation from the baseline $\hat d_n$ implemented by users \emph{in response to the price adjustment}.

A key question pertains to how to model $w_n$. Despite being the classic approach in the literature, modeling the load response to price changes deterministically—for example, by relying on user rationality~\cite{Cavraro2024Incentives}—is an oversimplification. In practice, the responsiveness of users to price signals may be influenced by external factors such as weather conditions, occupancy patterns, appliance availability, or individual behavioral preferences. Therefore, the approach proposed in this paper is to model $w_n$ as a random variable whose distribution depends on the price signal $x_n$. That is, 
$$
w_n \sim \mcD_n(x_n),
$$
where $\mcD_n(x_n)$ is a distribution inducing map,
and $w_n$ is supported on a complete and separable metric space, with support $[0, \delta_n]$. This formulation naturally fits within the framework of stochastic optimization with decision-dependent distributions~\cite{drusvyatskiy2023stochastic,miller2021outside,wood2021online,Wood2023SaddlePoint}, and it effectively models a stochastic, uncertain  response  of the users to a price $x_n$. 
%Overall, during the DRE, the load is given by
%
%\begin{equation}
%\label{eq:power_dem_n}
%d_n = \hat d_n + w_n
%\end{equation}
%with $w_n \sim \mcD_n(x_n)$. 
To ground our model, similarly to~\cite{wood2021online,Wood2023SaddlePoint}, we consider the set of Radon probability measures on a complete and separable metric space $M$ with finite first moment, denoted  as $\mathcal{P}(M)$; when computing expected values, we utilize the probability
measures $m_{(x_n)} \in \mathcal{P}(M)$ that is given as the output of the distributional map $\mcD_n(x_n)$ for each $x_n$; for example, $\E_{w_n \sim \mcD(\bx_n)}[w_n] = \int_M w ~ m_{(x_n)}(d w)$. 

Let
$$\mu_n(x_n) := \E_{w_n \sim \mcD(x_n)}[w_n].$$
denote the mean of $w_n$. In this paper, we assume a \emph{linearized} model for $\mu_n(x_n)$ as follows. 

\vspace{.1cm} 

\begin{assumption}
\label{ass:decr_mu}
For small enough $x_n$, $\mu_n(x_n)$ is a linearly decreasing function of the energy price change:    
\begin{equation}
\label{eq:mu}
\mu_n(x_n) = \delta_n - \beta_n x_n,    
\end{equation}
where $\beta_n > 0$ is the price change sensitivity of bus $n$. \hfill $\Box$
\end{assumption}

\vspace{.1cm} 

Collect all the quantities discussed above in the vectors $\bd^\pre, \hat \bd, \bdelta, \bw  \in \mathbb R^N$. 
A couple of remarks are presented next. 

\vspace{.1cm} 

\begin{remark}
We note that Assumption \ref{ass:decr_mu} is satisfied, for example, if the distributional map $\mcD_n(x_n)$ induces \emph{location-scale family}~\cite{wong2008preferences}:
\begin{align}
     w_n \stackrel{d}{=} \delta_n - \beta_n x_n + \varrho_n
\end{align}
where $\stackrel{d}{=}$ means ``in distribution,'' $\beta_n > 0$ models an (unknown) parameter, and $\varrho_n$ is some stationary zero-mean random variable; see also~\cite{miller2021outside,Wood2023SaddlePoint}. This is a model that captures the sensitivity to small price variations of customers at node $n$ via the distributional shift $\beta_n x_n $, plus an additional element of randomness via $\varrho_n$  to model  external factors. The supports of both $\varrho_n$ and $w_n$ are assumed compact, modeling limited total load flexibility. However, Assumption \ref{ass:decr_mu} is more general, and captures also other (possibly non-linear) distribution models, wherein the mean is approximately linear for small $x_n$.  \hfill $\Box$
\end{remark}

\vspace{.1cm} 

\begin{remark}
Assumption \ref{ass:decr_mu} is justified as $x_n$ is the deviation from the nominal electricity price $\pi$, and it would be small to satisfy price regulation requirements; c.f.~the set $\mcX_n$ in Section \ref{sec:dr}. Moreover, implicitly, we assume that $x_n \leq \delta_n/\beta_n$ so that $\mu_n(x_n) \geq 0$. In general, linear models have been often used in the literature to capture price elasticities, e.g., see~\cite{ALBERINI2011870,MAMKHEZRI2025114537,LI2021120921,FILIPPINI2018137} when load adjustment to price changes is fast. Notably, linear elasticities arise when the interactions between the price makers and energy users is modeled as a Stackelberg game in which user utility of consumption functions are quadratic and the energy prices are linear~\cite{Cavraro2024Incentives,FILIPPINI2018137}.  \hfill $\Box$
\end{remark}

\section{A Demand Response Problem Formulation} \label{sec:dr}

In this section, we formulate a stochastic optimization problem whose solution yields the optimal energy price adjustment to be implemented by the utility. Without loss of generality, we consider the case in which the pre-event power flow through the substation exceeds the desired upper bound. Therefore, during the DRE, the system operator aims to lower the total power demand across the feeder to match a reference $\pref$. 

Let $p_0(\bw)$ denote the total power across the feeder to be controlled by the utility company. This can be, for example, either the power flowing through the distribution substation or the aggregation of the powers at the customers' meters. Either way, we use a model of the form: 
\[
p_0(\bw) = \bs^\top (\hat{\bd} + \bw - \br),
\]
with $\bs \in \mathbb{R}^N$. In particular, if $p_0(\bw)$ represents the power flowing through the distribution substation, then $ \bs^\top$  are the coefficients of a linear approximation of the AC power flow equations~\cite{dall2017optimal}; on the other hand, if one measures the power at the distribution transformers or directly at the customers' meters, then the expression for the total power of interest can be readily written with $\bs = \mathbf{1}$. The goal is to steer $p_0$ below a desired upper bound $\pref$. Equivalently, by introducing the target variable
\begin{equation}
\label{eq:exp_xi}
    \xi(\bw) = p_0(\bw) - \pref ,
\end{equation}
we aim to enforce $\xi(\bw) \leq 0$. Before proceeding, a couple of remarks are in order. 

\vspace{.1cm} 

\begin{remark}
When $p_0(\bw)$ represents the power at the   substation,  we utilize a linearized model  to design the pricing strategy presented in the ensuing section. However, the deployed algorithm will rely on actual measurements of the power at the substation, thereby capturing the nonlinearities of the power flows.  \hfill $\Box$
\end{remark}

\vspace{.1cm} 

\begin{remark}
While we consider DREs with a demand reduction,  the case in which the pre-event power flow through the substation is below a minimum desired value $\pref$ can be treated analogously. This situation may arise, for example, when there is excess generation in the distribution network. \hfill $\Box$
\end{remark}

\vspace{.1cm}

During the DRE, we propose to minimize a cost given by the weighted sum of two components. 
The first represents the cost incurred by the utility during the DRE, and corresponds to the difference between the generation cost and the total payments collected from users, defined as $c_0(\bx,\bw) := \pi_0 p_0(\bw)  - \sum_n c_n(w_n)$. Then, $c_0(\bx,\bw)$ can be further written as: 
\begin{align}
c_0(\bx,\bw)  %= \sum_n (\pi_0 s_n - \pi - x_n) (d_n(w_n) - r_n) - \omega \notag \\
& = \sum_n (\pi_0 s_n - \pi - x_n) (\hat d_n + w_n  - r_n) - \omega \notag \\
& = (\pi_0 \bs - \pi \1 - \bx)^\top (\hat \bd + \bw - \br) - N \omega
\label{eq:c}
\end{align}
where $\pi_0$ is the energy cost for the utility. 

% {\color{red} ED: Note for me to remember symbols: 
% $$\xi(\bw) = p_0(\bw) - \pref$$
% $$d_n(w_n) = \hat d_n + w_n$$
% $$p_0(\bw) = \1^\top (\bd(\bw) - \br)$$
% with $w_n$ a r.v.. 
% Therefore: 
% \begin{align*}
%    p_0(\bw) & = \1^\top (\bd(\bw) - \br) \\
%    & = \underbrace{\sum_n \hat d_n}_{:= \hat d} + \underbrace{\sum_n w_n}_{:=w} - \underbrace{\sum_n r_n}_{:=r} 
% \end{align*}
% }

The second component penalizes the squared norm of $\bx$, reflecting the objective of keeping the energy price as close as possible to the pre-event price $\pi$. In addition, the price adjustment at each node, $x_n$, is constrained to lie within a predefined interval $\mcX_n = [x_{\min}, x_{\max}]$. This constraint protects users from excessive bill fluctuations and captures regulatory limits on price variations. The optimization problem posed to design the  price adjustments $\bx = [x_1, \ldots, x_N^\top]$ results in: 
\begin{subequations}
\label{eq:opt_probl}
\begin{align}
    \min_{\bx \in \mcX} & \; \E_{\bw \sim \mcD(\bx)} [c_0(\bx,\bw)] + \frac \kappa 2 \|\bx\|^2 \label{eq:opt_probl_x}\\
    \text{s.t. }  & \E_{\bw \sim \mcD(\bx)}[\xi(\bw)] \leq 0.
    \label{eq:constr}
\end{align}
\end{subequations}
where $\mcX = \times_{n=1}^{N} \mcX_n$, and where $\bw \sim \mcD(\bx)$ is a short-hand notation for $w_n \sim \mcD_n(\bx_n)$, $\forall \,  n = 1, \ldots, N$.

Using Assumption~\ref{ass:decr_mu}, and noticing that $\E_{\bw \sim \mcD(\bx)}  d(\bw) = \hat \bd + \bdelta - \bB\bx$ and 
\begin{subequations}
\label{eq:expcts}
\begin{align}
& \E_{\bw \sim \mcD(\bx)}  \xi(\bw) = - \bs^\top \bB \bx + \tilde p_0 \label{eq:Exi}\\
& \E_{\bw \sim \mcD(\bx)}  c_0(\bx,\bw) = \bx^\top \bB \bx + \bx^\top \big(\bB ( \pi \1 - \pi_0 \bs) - \bd^\pre \big) + \notag \\
& \qquad \qquad + (\pi_0 \bs^\top - \pi \1^\top)  \bd^\pre - N \omega \label{eq:c}  
\end{align}    
\end{subequations}
where $\bbeta = [\beta_1, \ldots, \beta_N]^\top$ and $\bB = \text{diag}(\bbeta) \in \mathbb R^{N \times N}$, and with $\tilde p_0 = \bs^\top \bd^\pre - \pref$ for brevity, problem~\eqref{eq:opt_probl} can be re-written as the following convex linearly constrained quadratic program (LCQC):
\begin{subequations}
\label{eq:LCQP}
\begin{align}
    \min_{\bx \in \mcX} & \; \bx^\top \Big(\bB + \frac \kappa 2 \bI \Big) \bx + \bx^\top \big(\bB ( \pi \1 - \pi_0 \bs) - \bd^\pre \big)\\
\text{s.t. } & \quad - \bs^\top \bB \bx + \tilde p_0 \leq 0.
\end{align}
\end{subequations}
Unfortunately, solving~\eqref{eq:LCQP} is often impractical -- if not infeasible -- for the system operator for two main reasons: (i) the price sensitivities $\bbeta$ are unknown and difficult to estimate accurately~\cite{lin2023plug,bracale2024learning,wood2024solving}; and (ii) instead of having full knowledge of all loads (both flexible and inflexible) at the customer side in real time, the system operator typically only has access to real-time measurements of $p_0(\bw(t))$. These challenges are particularly critical for enforcing the constraint~\eqref{eq:constr}, since inaccurate estimates of $\bbeta$ may lead to violations, and achieving real-time full load observability would require pervasive metering infrastructure.

\section{Online Feedback Optimization Algorithm} \label{sec:control}

In this section, we introduce a feedback-based stochastic algorithm to steer the price toward an approximated solution of~\eqref{eq:opt_probl}. Our proposed approach combines tools from performative optimization~\cite{hardt2025performative,Wood2023SaddlePoint}, regularized Lagrangian methods~\cite{koshal2011multiuser}, and feedback-based primal-dual algorithms~\cite{dall2016optimal}.

\subsection{Online stochastic algorithm} 
To begin, the Lagrangian function associated with~\eqref{eq:opt_probl} is: 
\begin{equation*}
\mcL(\bx,\lambda) = \E_{\bw \sim \mcD(\bx)}\Big[ c_0(\bx,\bw) \Big] + \frac \kappa 2 \|\bx\|^2 + \lambda \E_{\bw \sim \mcD(\bx)} \Big[ \xi(\bw)\Big]    
\end{equation*}
with $\lambda \geq 0$ the dual variable associated with~\eqref{eq:constr}. Under the current setup and assumptions, $\mcL(\bx,\lambda)$ is strongly convex in $\bx$ for any $\lambda \geq 0$, and concave in $\lambda$ for any $\bx \in \mcX$ (in fact, linear). A common approach to ensure linear convergence of  primal-dual methods is to leverage a regularized Lagrangian~\cite{koshal2011multiuser}: 
\begin{equation}
\label{eq:reg_Lagr}
\hat \mcL(\bx,\lambda) = \mcL(\bx,\lambda) - \frac \eta 2 \lambda^2
\end{equation}
with $\eta > 0$ a regularization coefficient. Hereafter, the \emph{unique} saddle point of~\eqref{eq:reg_Lagr} is referred to as \emph{optimal point} of the regularized Lagrangian, is denoted as $\bz^\star := [(\bx^\star)^\top, \lambda^\star ]^\top$, and it is defined as
$$\bx^\star = \arg \min_{\bx \in \mcX} \max_{\lambda \in \mcY} \hat\mcL(\bx,\lambda), \quad \lambda^\star = \arg \max_{\lambda \in \mcY} \min_{\bx \in \mcX} \hat \mcL(\bx,\lambda).$$
where $\mcY \subset \mathbb{R}_{>0}$ is a finite interval that contains $\lambda^\star$ (built as discussed in, e.g.,~\cite{koshal2011multiuser}).
It is known that $\bz^\star$ deviates from the saddle points of the  Lagrangian function $\mcL(\bx,\lambda)$ when the constraint is active at optimality; fortunately, a bound on the distance between $\bz^\star$ and  saddle points of $\mcL(\bx,\lambda)$ as a function of $\eta$ is provided in~\cite[Prop.~3.1]{koshal2011multiuser}. A standard primal-dual methods applied to~\eqref{eq:reg_Lagr} then amounts to the following iterative steps (where $t \in \mathbb{N}$ is the iteration index):   
\begin{subequations}
\label{eq:PD_PO_gen}
\begin{align}
\bx(t+1) & = \big[\bx(t) - \epsilon \nabla_\bx \hat \mcL(\bx(t),\lambda(t))\big]_\mcX 
 \label{eq:PD_PO_gen_x} \\% \tag{POa}\\
\lambda(t+1) & = \big[(1 + \epsilon \eta) \lambda(t) + \epsilon \, \mathbb{E}_{\bw \sim \mcD(\bx)} [ \xi(\bw)] \big]_\mcY \label{eq:PD_PO_gen_l} % \tag{POb}
\end{align}
\end{subequations}
where $\epsilon>0$ is a step size and
\begin{align}
\hspace{-.2cm} \nabla_\bx \hat \mcL(\bx,\lambda) \!=\!2\Big(\bB + \frac \kappa 2 \bI\Big) \bx + \bB(\pi \1 - \pi_0 \bs) - \bd^\pre - \lambda \bbeta. \hspace{-.1cm}
\label{eq:primal_grad}
\end{align}
We will refer to~\eqref{eq:PD_PO_gen} as  \emph{Performative Optimum} (PO)
algorithm, borrowing the terminology from~\cite{lin2023plug,Wood2023SaddlePoint}. As discussed in Section~\ref{sec:model}, it is infeasible to perform the updates~\eqref{eq:PD_PO_gen} in practice without pervasive metering. The proposed strategy is then to modify~\eqref{eq:PD_PO_gen} by: (i) leveraging estimates 
$\hat \bbeta$ of $\bbeta$~\cite{lin2023plug,bracale2024learning,wood2024solving} in~\eqref{eq:PD_PO_gen_x}, and leveraging a single-measurement stochastic approximation of $\E_{\bw \sim \mcD(\bx)} [ \xi(\bw)] $ in~~\eqref{eq:PD_PO_gen_l}. Accordingly, the proposed online algorithm amounts to the following iterations: 
\begin{subequations}
\label{eq:PD_In_st}
\begin{align}
&\bx(t+1) = \big[\bx(t) - \epsilon \, \bg(\bx(t),\lambda(t))\big]_\mcX \label{eq:PD_In_st_x} \\
&\lambda(t+1) = \big[(1+ \epsilon \eta)\lambda(t) + \epsilon(p_0(t) - \pref) 
\big]_\mcY \label{eq:PD_In_st_l}
\end{align}
\end{subequations}
where $p_0(t)$ is a \emph{measurement} of $p_0(\bw(t))$ at time $t$ (and, thus, $p_0(t) - \pref$ is a measurement of the regulation error), and where 
\begin{equation}
\label{eq:g}
\bg(\bx,\lambda) := 
2\Big(\hat \bB + \frac \kappa 2 \bI\Big) \bx + \hat \bB(\pi \1 - \pi_0 \bs) - \bd^\pre - \lambda \hat \bbeta \,
\end{equation}
where $\hat \bB = \text{diag}(\hat \bbeta)$.
We will refer to~\eqref{eq:PD_In_st} as the \emph{Stochastic Inexact Performative Optimum} (Stochastic InPO) algorithm, which is summarized as~Algorithm~\ref{alg:inpo} and illustrated in Figure~\ref{fig:F_diagram}.

\begin{figure}[t!]
\centering
\includegraphics[width=0.8\columnwidth]{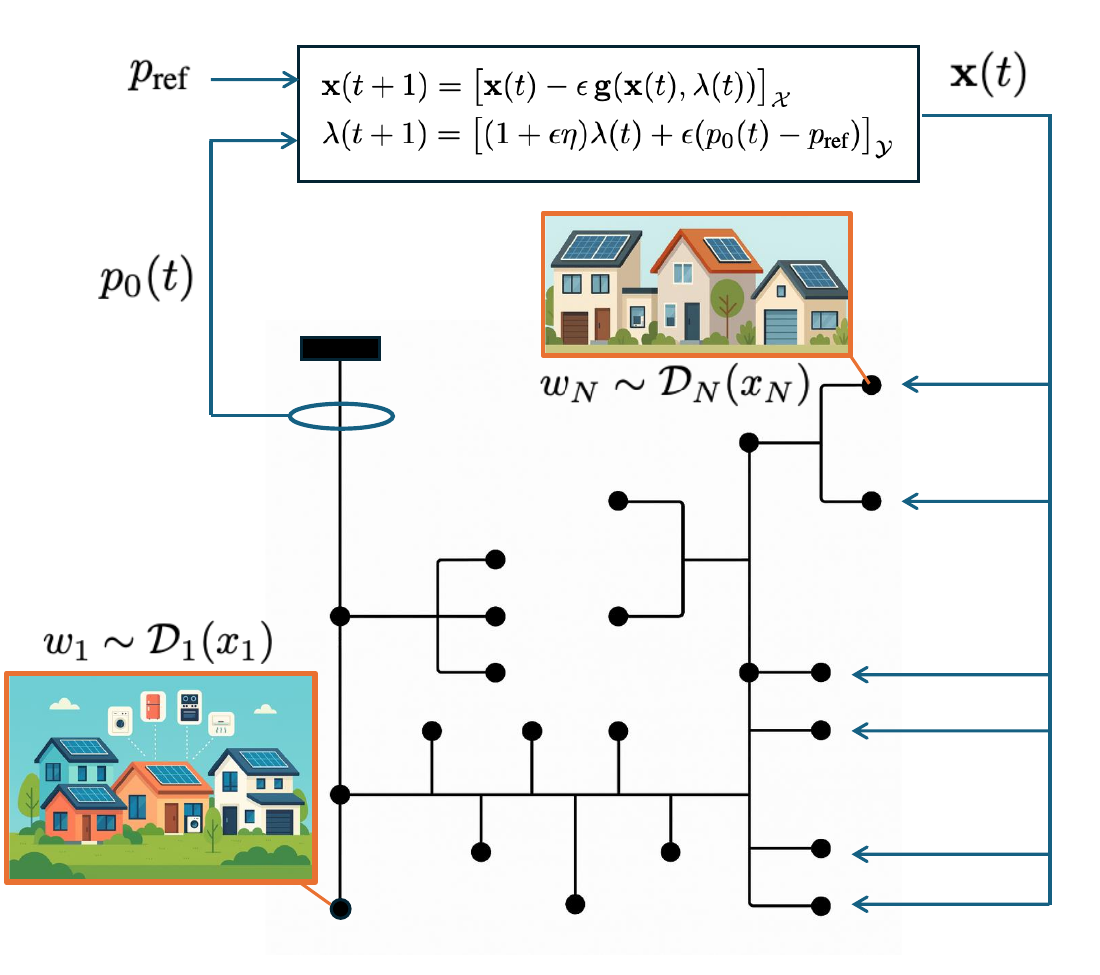}
\caption{Diagram illustrating the steps of Algorithm~\ref{alg:inpo}. This is a scenario where $p_0(t)$ represents a measurement of the power at the point of common coupling.}
\label{fig:F_diagram}
\end{figure}

%%%%%%%%
\begin{algorithm}[t!]
\caption{\emph{Stochastic InPO algorithm}}
\label{alg:inpo}
\textbf{Data gathering}: Compute $\bs$. Estimate $\hat \bbeta$. 

\textbf{Initialization}:  Set $\epsilon > 0$, $\eta > 0$. Set $\bx(0)$, $\lambda(0)$.

\textbf{Real-time operation}: for $t \geq 1$, repeat:

\quad [S1] Update $\bx(t)$ via~\eqref{eq:PD_In_st_x}. 

\quad [S2] Dispatch price adjustments to customers. 

\quad [S3] Measure $p_0(t)$.

\quad [S4] Update $\lambda(t)$ via~\eqref{eq:PD_In_st_l}.

\quad [S5] Go to [S1].
\end{algorithm}
%%%%%%%%

One may utilize a stochastic approximation, based on realizations of $\bw(t)$, also for~\eqref{eq:PD_In_st_x}; however, this would require real-time metering capabilities from each node of the network; instead, we leverage estimates 
$\hat \bbeta$ to bypass this practical challenge. It is also worth pointing out that, if an estimate of $\bbeta$ is not available,  one can set $\hat \beta_n = 1$ for all $n = 1, \ldots, N$; this represents a ``sign approximation''.   

In the following sections, we provide a discussion to compare Algorithm~\ref{alg:inpo} with alternative approaches, and then we provide results for the convergence to $\bz^\star$.

\subsection{Discussion} 

In this section, we discuss limitations of alternative approaches for the design of pricing strategies.

Consider again the PO algorithm~\eqref{eq:PD_PO_gen}, which is a standard primal-dual strategy for identifying the saddle point of~\eqref{eq:reg_Lagr}. By using the expectations~\eqref{eq:expcts}, the PO is rewritten as: 
\begin{subequations}
\label{eq:PD_PO}
\begin{align}
\bx(t+1) &= \big[\big((1 - \epsilon \kappa)\bI - 2 \epsilon \bB\big)\bx(t) + \epsilon \bbeta \lambda(t)+ \notag \\
& \qquad  - \epsilon \bB(\pi \1 - \pi_0 \bs) + \epsilon \bd^\pre) \big]_\mcX 
\\% \tag{POa}\\
\lambda(t+1) & = \Big[(1 - \epsilon \eta)\lambda(t) - \epsilon \bs^\top \bB \bx(t) + \epsilon \tilde p_0 \Big]_\mcY % \tag{POb}
\end{align}
\end{subequations}
where $t$ is again the iteration index. If the coefficients $\bbeta$ are available, the system operator could run~\eqref{eq:PD_PO} offline to convergence without the need of feedback from the system to calculate the prices. However, the coefficients $\bbeta$ are not usually known. Therefore, instead of$\bbeta$, the system operator may rely on an estimate $\hat \bbeta$ and use it to run the following inexact version of~\eqref{eq:PD_PO}:
\begin{subequations}
\label{eq:InPO}
\begin{align}
\bx(t+1) & = \Big[ \big( (1 - \epsilon \kappa) \bI -2\epsilon \hat \bB \big) \bx(t) + \notag \\
& \qquad + \epsilon \hat \bbeta \lambda(t) - \epsilon(\pi - \pi_0) \hat \bbeta + \epsilon \bd^\pre) \big]_\mcX 
\\
\lambda(t+1) &= \Big[(1 - \epsilon \eta)\lambda(t) - \epsilon \bs^\top \bB \bx(t) + \epsilon \tilde p_0 \Big]_\mcY.
 \end{align}
\end{subequations}
We will refer to~\eqref{eq:InPO} as the \emph{Inexact Performative Optimum} (InPO) algorithm. Algorithm~\eqref{eq:InPO} can be run to convergence offline, too. The main limitation of~\eqref{eq:InPO} is that its performance depends on the precision of the estimated sensitivities (and we will elaborate on this point in the experiments in Section~\ref{sec:num}).

A common approach when dealing with decision dependent probability distributions is to seek \emph{stable points} $(\bx_\eq,\lambda_\eq)$ rather than optimal points~\cite{Wood2023SaddlePoint,pmlr-v119-perdomo20a}. Stable points (also termed equilibrium points in some prior works) are optimal with respect to the distribution they induce; that is~\cite{Wood2023SaddlePoint}:
\begin{align*}
&\bx_\eq = \arg \min_{\bx \in \mcX} \max_{\lambda \in \mcY} \E_{\bw \sim \mcD(\bx_\eq)} \phi(\bx,\lambda,\bw)\\
&\lambda_\eq = \arg \max_{\lambda \in \mcY} \min_{\bx \in \mcX} \E_{\bw \sim \mcD(\bx_\eq)} \phi(\bx,\lambda,\bw)
\end{align*}    
with
$$ \phi(\bx,\lambda,\bw) =  c_0(\bx,\bw) + \frac \kappa 2 \|\bx\|^2 + \lambda  \xi(\bw)$$
The following \emph{Performative Stable} (PS) algorithm, developed based on~~\cite{Wood2023SaddlePoint}, can be used to converge to equilibrium points:
\begin{subequations}
\label{eq:PerfSt}
\begin{align}
&\bx(t+1) = \Big[\big((1-\epsilon \kappa )\bI - \epsilon \bB\big)\bx(t) + \epsilon \bd^\pre  \Big]_\mcX \label{eq:PerfSt_x}\\
&\lambda(t+1) = \Big[(1 - \epsilon \eta)\lambda(t) - \epsilon \bs^\top \bB \bx(t) + \epsilon \tilde p_0 \Big]_\mcY.
\end{align}    
\end{subequations}
Algorithm~\eqref{eq:PerfSt} presents two main limitation. The first is that it relies on the knowledge of the sensitivities $\bbeta$. But, even of one uses estimates of $\bbeta$, the price update is decoupled from the one of the Lagrange multiplier $\lambda$. That is, the price does not take into account violations of the operational constraint~\eqref{eq:constr}. 

The three algorithms presented in this subsections will be used as a comparison with the proposed  Stochastic InPO in Section~\ref{sec:num}.

\subsection{Convergence} 

In this section, we analyze the convergence of the online stochastic algorithm~\eqref{eq:PD_In_st}. We start with the static setup, where the non-dispatchable loads are fixed during the DRE.  

To state the main convergence result, we set $\bz := [\bx^\top, \lambda, ]^\top$ and define  $\mcZ := \mcX \times \mcY$. Also, define the following map: 
\begin{align}
    \label{eq:truemap}
    G(\bz) := 
    \begin{bmatrix}
        \nabla_\bx \hat \mcL(\bx(t),\lambda(t)) \\
        - \nabla_\lambda \hat \mcL(\bx(t),\lambda(t))
    \end{bmatrix}
\end{align}
Using~\eqref{eq:truemap}, the algorithm~\eqref{eq:PD_PO} for finding the PO can be re-written as $\bz(t+1) = \left[\bz(t) - \epsilon G(\bz(t)) \right]_\mcZ$.  The map $\bz \mapsto G(\bz)$ is $\nu$-strongly monotone and $L$-Lipchitz over $\mcZ$, where $\nu = \min\{\kappa + 2 \min{\beta_n},\eta\}$ and where $L$ can be computed as in~\cite[Lemma~3.4.]{koshal2011multiuser}. Note that $\bz^\star$, by definition, satisfies the fixed-point equation $\bz^\star = \text{proj}_{\mcZ} \left\{\bz^\star - \epsilon G(\bz^\star) \right\}$. Regarding the stochastic iteration~\eqref{eq:PD_In_st}, we have the following assumption. 

\vspace{.1cm}

\begin{assumption}
\label{as:boundederror}
    There exists $\bar{e}_\xi < + \infty$ such that 
    $$
    \E_{\bw \sim \mcD(\bx(t))} \Big[   \Big|\xi(\bw(t)) - \E_{\bw \sim \mcD(\bx(t))} [\xi(\bw)]  \Big|  \Big]\leq \bar{e}_\xi 
    $$
    for any $\bx \in \mcX$. \hfill $\Box$
\end{assumption}

\vspace{.1cm}

The assumption implies that the error in incurred by approximating the expected value of $\xi(\bw)$ (which is computable if the distribution is known and all the loads can be measured) with the single-observation estimate $\xi(\bw(t))$ is bounded. Then, we have the following result. 

\vspace{.1cm}

\begin{proposition}[Convergence with static nonflexible loads]
\label{prop:static}
    Consider the stochastic algorithm~\eqref{eq:PD_In_st}, and let $\bz(t)$, $t \in \mathbb{N}$, be the sequence of prices and dual variables generated by~\eqref{eq:PD_In_st}, starting from $\bz(0) \in \mcZ$. Let Assumption~\ref{as:boundederror} hold. Let $\epsilon$ be such that  $\epsilon < 2\nu/L^2$. 
    Then, the error $e(t) := \|\bz(t) - \bz^\star\|$ can be bounded as
    \begin{align}
    \E[e(t)] \leq & ~ c(\epsilon)^{t} \E[e(0)] +  \frac{\epsilon (\|\hat{\bbeta} - \bbeta\| B  + \bar{e}_\xi)}{1 - c(\epsilon)} ,  \,\, t \in \mathbb{N}
\end{align}
where  $c(\epsilon) := (1- 2 \epsilon \nu + \epsilon^2 L^2)^\frac{1}{2} < 1$ and $B := 2 X + |\pi - \pi_0| + \Lambda $, with $X = \max_{\bx \in \mcX} \|\bx\|_2$ and $\Lambda = \max_{\lambda \in \mcY} |\lambda|$. \hfill $\Box$
\end{proposition}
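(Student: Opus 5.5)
The plan is to write the stochastic InPO iteration~\eqref{eq:PD_In_st} as a perturbed projected gradient iteration for the exact map $G$, and then use the contraction of that map. Define the stochastic error vector
\[
\bdelta(t) := \begin{bmatrix} \bg(\bx(t),\lambda(t)) - \nabla_\bx \hat\mcL(\bx(t),\lambda(t)) \\ -\big(p_0(t)-\pref\big) + \E_{\bw\sim\mcD(\bx(t))}[\xi(\bw)] \end{bmatrix}.
\]
With this definition, \eqref{eq:PD_In_st} reads $\bz(t+1) = [\bz(t) - \epsilon G(\bz(t)) - \epsilon \bdelta(t)]_\mcZ$. I will treat the sign convention of the dual step and the regularization term exactly as in the definition~\eqref{eq:truemap} of $G$, so that the second block of $G$ is $\eta\lambda - \E[\xi]$ and the $\lambda$-update matches up to the error term.

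Next I use the fixed-point identity $\bz^\star = [\bz^\star - \epsilon G(\bz^\star)]_\mcZ$. Since projection onto the convex set $\mcZ$ is nonexpansive, this gives
\[
e(t+1) \le \|\bz(t)-\bz^\star - \epsilon(G(\bz(t)) - G(\bz^\star))\| + \epsilon\|\bdelta(t)\|.
\]
Expanding the square of the first term and using $\nu$-strong monotonicity and $L$-Lipschitz continuity gives the standard bound $\|\cdot\|^2 \le (1-2\epsilon\nu+\epsilon^2L^2)e(t)^2$. The first term is therefore at most $c(\epsilon)e(t)$, and $\epsilon<2\nu/L^2$ ensures $c(\epsilon)<1$.

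The main work is bounding $\E\|\bdelta(t)\|$. I split it into the primal block and the dual block and use $\|\bdelta\| \le \|\text{primal}\| + |\text{dual}|$.

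\textbf{Primal block.} Comparing~\eqref{eq:g} with~\eqref{eq:primal_grad}, the primal block equals
\[
2(\hat\bB-\bB)\bx + (\hat\bB-\bB)(\pi\1-\pi_0\bs) - \lambda(\hat\bbeta-\bbeta).
\]
Using $\|(\hat\bB-\bB)\bx\| \le \|\hat\bbeta-\bbeta\|\,\|\bx\|$ together with $\|\bx\|\le X$ and $|\lambda|\le\Lambda$, this is at most $\|\hat\bbeta-\bbeta\|(2X + |\pi-\pi_0| + \Lambda)$. This step is the delicate one. To get exactly $|\pi-\pi_0|$, the term $\pi\1-\pi_0\bs$ has to be handled as the paper does in~\eqref{eq:InPO}, where it is written $(\pi-\pi_0)\hat\bbeta$, i.e.\ effectively taking $\bs\approx\1$ componentwise. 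I will adopt that simplification, entrywise bounding $(\hat\beta_n-\beta_n)(\pi-\pi_0 s_n)$. Otherwise the constant would be $\max_n|\pi-\pi_0 s_n|$, which gives the same structure with a modified $B$.

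\textbf{Dual block.} The dual block is $\xi(\bw(t)) - \E_{\bw\sim\mcD(\bx(t))}[\xi(\bw)]$, assuming the measurement $p_0(t)$ equals $p_0(\bw(t))$. Conditioning on $\bz(t)$ and applying Assumption~\ref{as:boundederror}, its conditional expectation is at most $\bar e_\xi$.

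Taking total expectations gives
\[
\E[e(t+1)] \le c(\epsilon)\E[e(t)] + \epsilon\big(\|\hat\bbeta-\bbeta\|B + \bar e_\xi\big).
\]
Unrolling this recursion and bounding the geometric sum $\sum_{k<t} c(\epsilon)^k \le 1/(1-c(\epsilon))$ yields the claim. The obstacle I expect is purely bookkeeping: keeping the conditioning on the filtration generated by $\bz(t)$ correct so that Assumption~\ref{as:boundederror} applies at the random point $\bx(t)$, and matching the constant $B$.
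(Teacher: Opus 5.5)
Your proposal is correct and follows essentially the same route as the paper. Like the paper, you write the iteration as a projected step with the exact map $G$ plus the perturbation $\tilde G - G$, and use the fixed-point identity with nonexpansiveness of the projection to get the contraction factor $c(\epsilon)$; you then split the perturbation into a primal block bounded by $\|\hat{\bbeta}-\bbeta\|B$ and a dual block controlled by Assumption~\ref{as:boundederror}, and sum the geometric series. The only differences are cosmetic: the paper unrolls the pathwise recursion before taking expectations, whereas you take conditional expectations step by step, and the paper also silently uses $\bs = \1$ when it writes the middle term as $(\pi-\pi_0)(\hat{\bbeta}-\bbeta)$, which you correctly flag.
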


\vspace{.1cm}

The proof Proposition~\ref{prop:static} can be found in the appendix. As expected, $e(t)$ exhibits, on average, a decaying trend up to an error floor that depends on the mismatch between the true sensitivities $\bbeta$ and their estimates, as well as on the stochastic error arising from approximating the expectation of the power $p_0$ with a single measurement.

We now turn the attention to a dynamic case, where $\hat d_n(t)$ may now be time-varying to reflect the case where customers switch on and off non-dispatchable loads during the DRE event. In this case,~\eqref{eq:opt_probl} is a parametric problem with a time-varying   parameter $\hat d_n(t)$; accordingly,~\eqref{eq:reg_Lagr} is time-varying and $\bz^\star(t)$ is now a trajectory. Similarly to~\cite{dall2016optimal}, we have the following tracking result proved in the appendix.  

\vspace{.1cm}

\begin{proposition}[Tracking of prices with dynamic nonflexible loads]
\label{prop:dynamic}
    Consider the stochastic algorithm~\eqref{eq:PD_In_st}, and let $\bz(t)$, $t \in \mathbb{N}$, be the sequence of prices and dual variables generated by~\eqref{eq:PD_In_st}, starting from $\bz(0) \in \mcZ$. Let $\bz(t)^\star$, $t \in \mathbb{N}$ be the sequence of optimal points.  Let Assumption~\ref{as:boundederror} hold. Let $\epsilon$ be such that  $\epsilon < 2\nu/L^2$. 
    Then, the error $e(t) := \|\bz(t) - \bz^\star(t)\|$ can be bounded as
    \begin{align}
    \E[e(t)] \leq & ~ c(\epsilon)^{t} \E[e(0)] +  \frac{\Delta + \epsilon (\|\hat{\bbeta} - \bbeta\|_2 B  + \bar{e}_\xi)}{1 - c(\epsilon)}
\end{align}
for any $ t \in \mathbb{N}$, where $\Delta := \sup_{t \geq 1} \|\bz^\star(t) - \bz^\star(t-1)\|$. \hfill $\Box$
\end{proposition}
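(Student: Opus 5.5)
The plan is to write the stochastic iteration~\eqref{eq:PD_In_st} as an inexact projected gradient step on the time-varying map, and then reuse the contraction argument of Proposition~\ref{prop:static}, adding one drift term. Let $G_t$ denote the map~\eqref{eq:truemap} built from the Lagrangian at time $t$, i.e., with $\hat\bd(t)$ in place of $\hat\bd$. Only the affine terms change with $t$, so $G_t$ keeps the same constants $\nu$ and $L$. We then write
\begin{equation*}
\bz(t+1) = \big[\bz(t) - \epsilon G_t(\bz(t)) - \epsilon \mathbf{e}(t)\big]_\mcZ ,
\end{equation*}
where the error $\mathbf{e}(t)$ has two blocks. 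The primal block is $\bg(\bx(t),\lambda(t)) - \nabla_\bx \hat\mcL_t(\bx(t),\lambda(t))$. The dual block is the difference between the measured and expected regulation error, $\xi(\bw(t)) - \E_{\bw\sim\mcD(\bx(t))}[\xi(\bw)]$. Here we use that the regularization enters both~\eqref{eq:PD_PO_gen_l} and~\eqref{eq:PD_In_st_l} identically.

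First, the primal block. Comparing~\eqref{eq:g} with~\eqref{eq:primal_grad}, the terms $\kappa\bx$ and $\bd^\pre$ cancel, and what remains is
\begin{equation*}
2(\hat\bB - \bB)\bx + (\hat\bB - \bB)(\pi\1 - \pi_0\bs) - \lambda(\hat\bbeta - \bbeta).
\end{equation*}
The diagonal matrix $\hat\bB-\bB$ acting on a vector is bounded by $\|\hat\bbeta-\bbeta\|$ times the sup-norm of that vector, which is at most its Euclidean norm. Applying this to each term and using $\bx\in\mcX$, $\lambda\in\mcY$ gives the bound $\|\hat\bbeta-\bbeta\|\,B$. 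This matches the constant in Proposition~\ref{prop:static}, with the $\bs$-dependent term absorbed as in that statement. The dual block is controlled in conditional expectation by $\bar e_\xi$, by Assumption~\ref{as:boundederror}.

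Next, the one-step recursion. By nonexpansiveness of the projection and the fixed-point identity $\bz^\star(t) = [\bz^\star(t) - \epsilon G_t(\bz^\star(t))]_\mcZ$,
\begin{equation*}
\|\bz(t+1) - \bz^\star(t)\| \le \|\bz(t) - \bz^\star(t) - \epsilon(G_t(\bz(t)) - G_t(\bz^\star(t)))\| + \epsilon\|\mathbf{e}(t)\|.
\end{equation*}
Expanding the square and using $\nu$-strong monotonicity and $L$-Lipschitz continuity bounds the first term by $c(\epsilon)\,e(t)$, with $c(\epsilon)<1$ because $\epsilon<2\nu/L^2$. The triangle inequality then gives
\begin{equation*}
e(t+1) \le \|\bz(t+1)-\bz^\star(t)\| + \|\bz^\star(t+1)-\bz^\star(t)\| \le c(\epsilon)e(t) + \epsilon\|\mathbf{e}(t)\| + \Delta .
\end{equation*}
Taking total expectations via the tower property over the filtration generated by $\bw(0),\dots,\bw(t-1)$, we use that $\bx(t)$ is measurable with respect to that past and that Assumption~\ref{as:boundederror} holds uniformly in $\bx$. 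This yields
\begin{equation*}
\E[e(t+1)] \le c(\epsilon)\E[e(t)] + \Delta + \epsilon\big(\|\hat\bbeta-\bbeta\|B + \bar e_\xi\big).
\end{equation*}
Unrolling this recursion and summing the geometric series $\sum_k c(\epsilon)^k \le 1/(1-c(\epsilon))$ gives the claimed bound.

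The main obstacle is the measurability and time-indexing bookkeeping in the dual step. The measurement $p_0(t)$ is taken after $\bx(t)$ is dispatched, so the noise must be conditioned on $\bx(t)$, and Assumption~\ref{as:boundederror} must be applied conditionally, not to a fixed $\bx$. A second point is that the measurement uses the actual $\hat\bd(t)$, so its conditional mean equals the dual gradient of $\hat\mcL_t$. We must therefore check that $\tilde p_0$ inside $G_t$ is evaluated with $\hat\bd(t)$, consistently with how $\bz^\star(t)$ is defined; otherwise an additional mismatch term would appear.
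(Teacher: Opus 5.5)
Your proposal is correct and follows essentially the same route as the paper's proof. Both arguments insert $\bz^\star(t)$ via the triangle inequality to split off the drift $\|\bz^\star(t+1)-\bz^\star(t)\|\le\Delta$, apply the fixed-point identity, nonexpansiveness of the projection and the $c(\epsilon)$-contraction of $G$, reuse the bound $\|\tilde G-G\|\le\|\hat\bbeta-\bbeta\|B+e_\xi(\bx(t))$ from Proposition~\ref{prop:static}, and unroll the recursion; your tower-property handling of the noise is a more careful version of the paper's expectation step.

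One caveat, which the paper shares. Since $\nabla_\bx\hat\mcL_t$ contains $-\bd^\pre(t)$ with $\bd^\pre(t)=\hat\bd(t)+\bdelta-\br$, the cancellation of $\bd^\pre$ you claim in the primal block presumes $\bg$ is evaluated with the current $\hat\bd(t)$. This is the primal analogue of the consistency check you raised for $\tilde p_0$; if it fails, an extra term proportional to $\epsilon\sup_t\|\bd^\pre(t)-\bd^\pre\|$ enters the error floor.
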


\vspace{.1cm}

The tracking error in this case depends on the maximum variability of the non-controllable loads via the term $\Delta$. 

%
% We stress the fact that perfect knowledge of the probability distributions or, under Assumption~\ref{ass:decr_mu}, of the price sensitivities $\bbeta$, is needed to directly solve~\eqref{eq:LCQP}.
% However, this is often unrealistic as price sensitivities are quickly changing.. \ab{this should be motivated better. I don't think you solve this problem here as you claim you use an estimate of $B$, so is it feasible to estimate it?}

\section{Numerical Examples} \label{sec:num}

We conducted case studies on the IEEE 37-bus feeder~\cite{kersting2018distribution}. We omitted the voltage regulators and converted the network to its single-phase equivalent, as shown in Fig.~\ref{fig:ieee37}. The feeder includes 25 buses with non-zero loads.

\begin{figure}[tb]
\centering
\includegraphics[width=0.45\columnwidth]{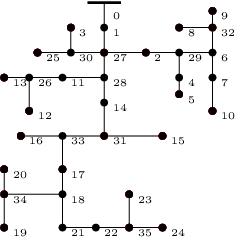}
\caption{IEEE 37-bus feeder. Bus 0 represents the grid substation.}
\label{fig:ieee37}
\vspace{-4mm}
\end{figure}

We tested the proposed algorithms in the two scenarios described below. In the first, we numerically validate the convergence properties of the proposed algorithm by considering constant non-flexible loads. In the second, we evaluate the algorithm’s performance under a realistic condition with time-varying non-flexible loads.

The stochastic user responses are modeled as follows.
For each user $n$, we assume that $w_n$ is drawn from a Gaussian distribution, $w_n \sim \mcN(\mu_n(x_n), \sigma_n(x_n))$. Here, $\mu_n(x_n)$ has the same form in~\eqref{eq:mu}, and $\sigma_n(x_n)$ is chosen as a positive, decreasing function of $x_n$, given by
$$\sigma_n(x_n) = \sigma_n^0 e^{- \zeta \beta_n x_n}.$$
where $\sigma_n^0$ and $\zeta$ are positive constants.
Since the literature reports a broad range of sensitivity values, the price sensitivities ${\beta_n}$ were drawn randomly from a uniform distribution over $[0,2]$~\cite{ReissHousehold2005}.
The DREs we considered featured desired setpoints $\pref$ that can be achieved without curtailing non-flexible demands, i.e.,
$\pref \in [\bs^\top \hat \bd, \bs^\top (\hat \bd + \bdelta)]$.
We set $\kappa = 5$ and $\eta = 0.01$, and used normalized prices $\pi_0 = 1$ and $\pi = 2$.
Notice that the PO, PS, and InPO algorithms can be run to convergence without requiring feedback from the grid. Hence, once the DRE begins, the system operator can simply dispatch the prices obtained after convergence of~\eqref{eq:PD_PO}, \eqref{eq:PerfSt}, and~\eqref{eq:InPO}.

\subsection{Constant Load Scenario}

For each bus $n$, the non-flexible load $\hat d_n$ was set equal to the nominal testbed load, whereas the flexible load $\delta_n$ was randomly chosen from a uniform distribution over $[0, 2 \hat d_n]$.

We first consider the case where $\bbeta$ is estimated as $\hat \bbeta = \1$, i.e., adopting the ``sign approximation''.
We compared the performance of the Stochastic InPO with those of the PO, PS, and InPO algorithms, with results reported in Figures~\ref{fig:P0_MC} and~\ref{fig:Cost_MC}.
The trajectories for the Stochastic InPO were obtained via 300 Monte Carlo simulations.

Figure~\ref{fig:P0_MC} shows the power exchanged with the external network. Both the PS and InPO yield average exchanged power levels far from $\pref$: as mentioned earlier, the PS price update does not account for constraint violations (see~\eqref{eq:PerfSt_x}), while the InPO is affected by model inaccuracies. Conversely, both the PO and the Stochastic InPO successfully regulate $p_0$ close to $\pref$: the PO does so by leveraging full knowledge of the probability distributions, and the Stochastic InPO by exploiting feedback from the system.

Figure~\ref{fig:Cost_MC} displays the operational cost achieved by each strategy. The PS and InPO attain the lowest costs, which is not surprising since they both fail to meet the power constraint. The PO and Stochastic InPO instead produce prices close to the optimal ones.
The black line represents the optimum of problem~\eqref{eq:LCQP}.
The small gap between $\pref$ and the exchanged power under the PO and Stochastic InPO arises because they are designed to converge to the saddle point of $\hat \mcL(\bx,\lambda)$ rather than that of $\mcL(\bx,\lambda)$.

Finally, we tested the Stochastic InPO under different values of $\hat \bbeta$.
Figure~\ref{fig:Dist_betas} shows the distance between the average price $\mathbb E[\bx(t)]$ and the optimal price, evaluated as $\|\bx(t) - \bx^*\|$, for different estimates $\hat \bbeta$. Performance deteriorates with increasing estimation error, as expected from Proposition~\ref{prop:static}.
Nevertheless, Figure~\ref{fig:P0_betas} shows that, regardless of the chosen $\hat \bbeta$, the power constraint is satisfied. This demonstrates the benefit of a feedback-based approach: adopting a rough estimate of the sensitivities affects efficiency but does not prevent meeting the operational constraints.

\begin{figure}[tb]
\centering
\includegraphics[width=0.9\columnwidth]{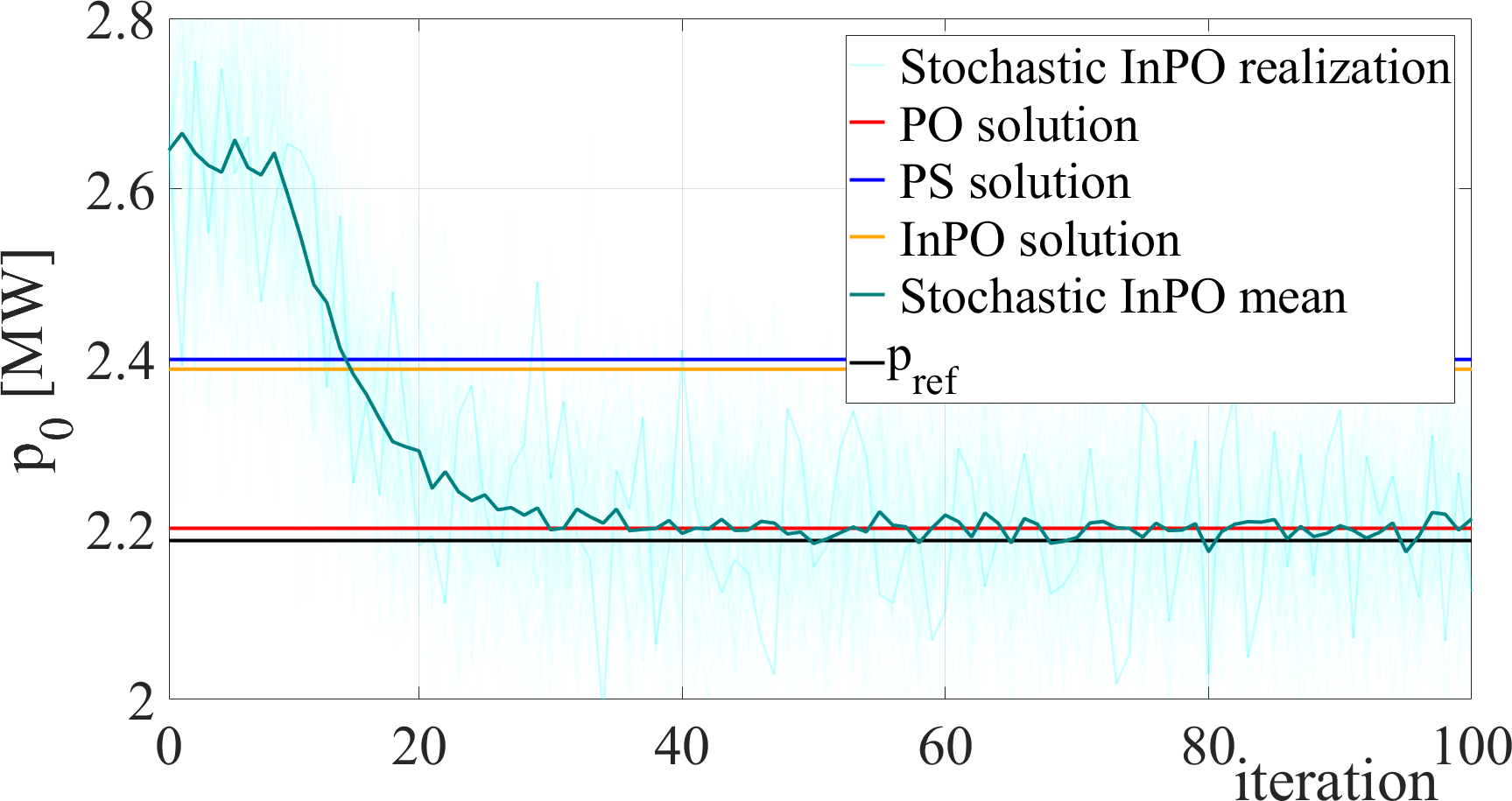}
\caption{Average power exchanged with the external network. %The optimal solution coincides with $\pref$.
}
\label{fig:P0_MC}
\vspace{-5mm}
\end{figure}

\begin{figure}[tb]
\centering
\includegraphics[width=0.9\columnwidth]{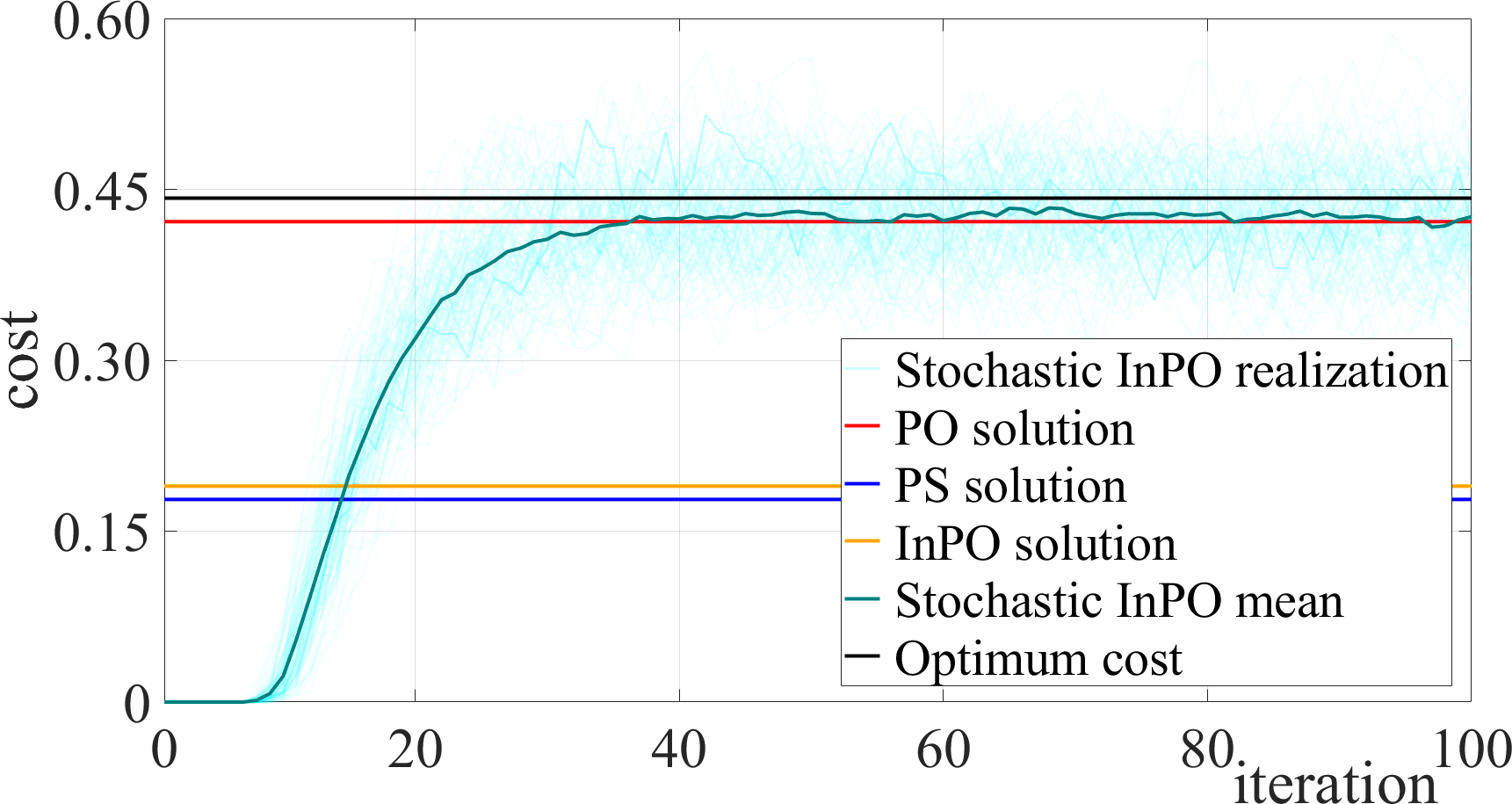}
\caption{Average cost of operating the power grid during the DRE, computed according to~\eqref{eq:opt_probl_x}. %The PS and InPO achieve lower costs but fail to meet the desired load reduction level.
}
\label{fig:Cost_MC}
\end{figure}

\begin{figure}[tb]
\centering
\includegraphics[width=0.9\columnwidth]{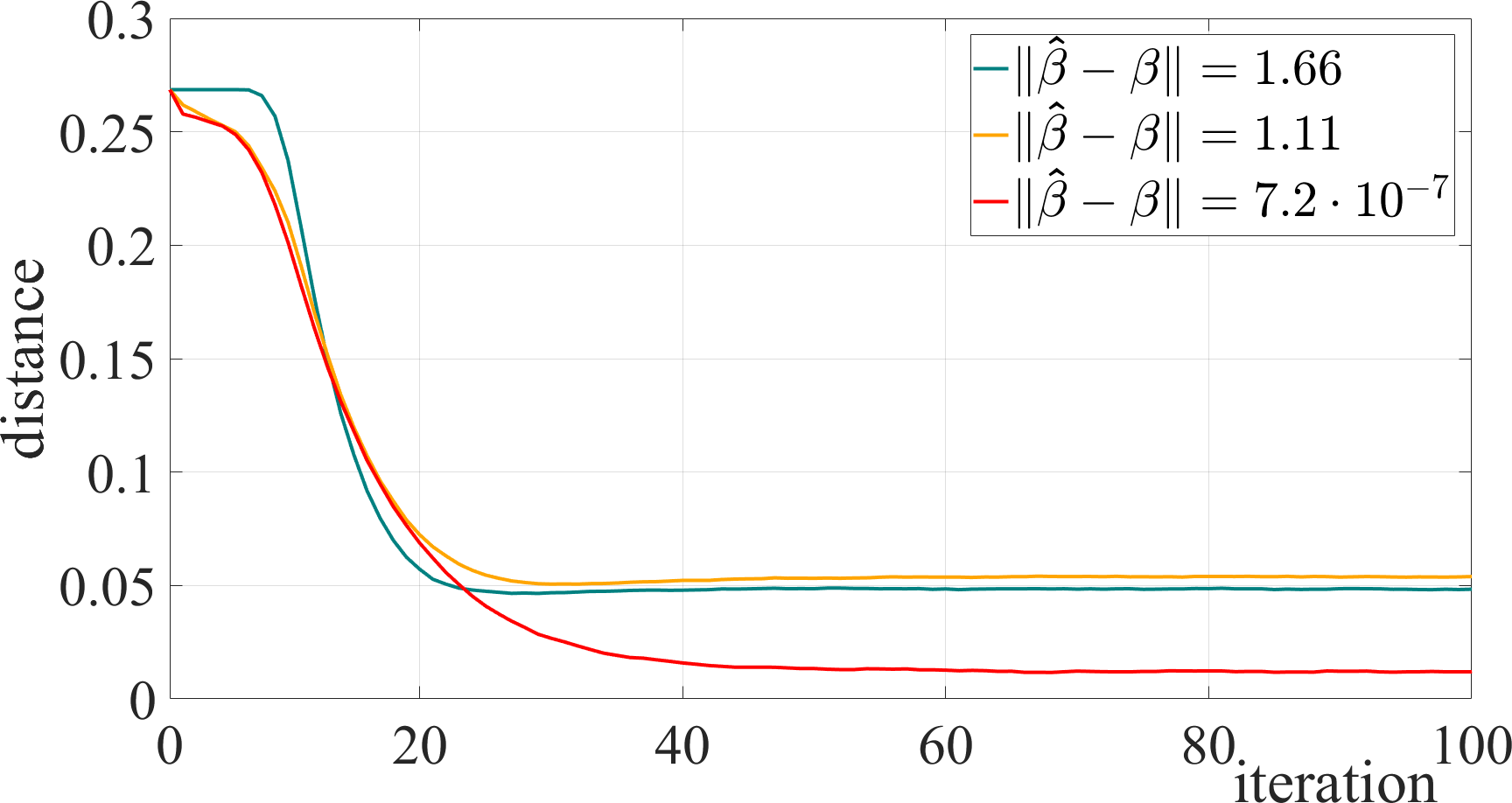}
\caption{Distance between the price vector $\bx(t)$ and the optimal prices $\bx^*$. The distance increases with larger estimation errors affecting $\hat \bbeta$.}
\label{fig:Dist_betas}
\end{figure}

\begin{figure}[tb]
\centering
\includegraphics[width=0.9\columnwidth]{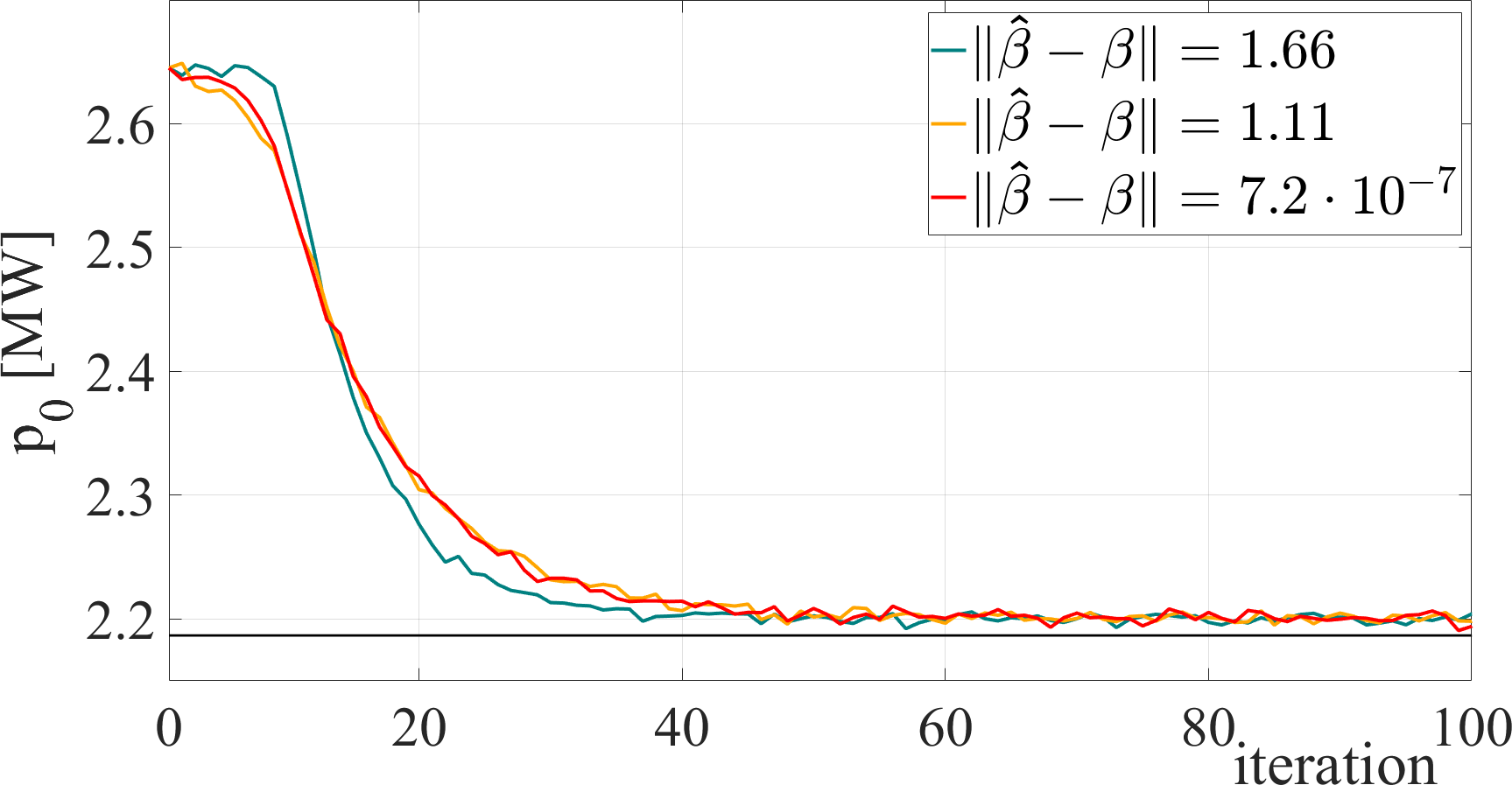}
\caption{Average power exchanged with the external network. 
%The black line represents $\pref$. %The Stochastic InPO satisfies the desired load reduction for different choices of $\hat \bbeta$.
}
\label{fig:P0_betas}
\end{figure}

\subsection{Time-varying Load Scenario}

Each bus hosted 25 residential users. Minute-level data collected from the Pecan Street project on January 1, 2013, were used as power profiles for houses with random assignments~\cite{PecanData}.
These net power profiles correspond to residences both with and without solar generation.
After aggregation, the profiles were normalized so that the peak demand at each bus matched the nominal testbed loads.

We considered a DRE lasting from 10~am to 3~pm.
Figures~\ref{fig:PO_MC_TVa} and~\ref{fig:Cost_MC_TV} show the average power exchange and operational cost over 300 Monte Carlo simulations.

Again, the operational cost resulting from the PS is the lowest, as it fails to regulate $p_0$. Indeed, the PS price updates do not account for constraint violations.
Unlike in the static case, here the InPO over-satisfies the power constraint, yielding $p_0 < \pref$. This over-satisfaction leads to the highest operational cost among all strategies.
By leveraging perfect knowledge of the probability distributions, the PO achieves the best overall performance, resulting in the lowest cost among all algorithms that steer $p_0$ close to $\pref$.
The Stochastic InPO is slightly less efficient in terms of operational cost but still regulates $p_0$ satisfactorily.

The numerical simulations show the effectiveness and robustness of the proposed real-time pricing strategy. The Stochastic InPO consistently meets the exchanged power constraints under uncertain user sensitivities and varying non-flexible loads. Enforcing network-level constraints without full model knowledge highlights the potential of feedback-based pricing mechanisms to enhance robustness in DR mechanisms.

Further, the comparison among the different algorithms provides key insights into the trade-offs between model-based and feedback-based coordination. While the PO achieves near-optimal performance when perfect information is available, its practicality is limited in realistic scenarios where such information is hard to obtain. In contrast, the Stochastic InPO offers an alternative that can adapt to changing conditions and uncertainty in user behavior. 

\begin{figure}[tb]
\centering	
\includegraphics[width=0.9\columnwidth]{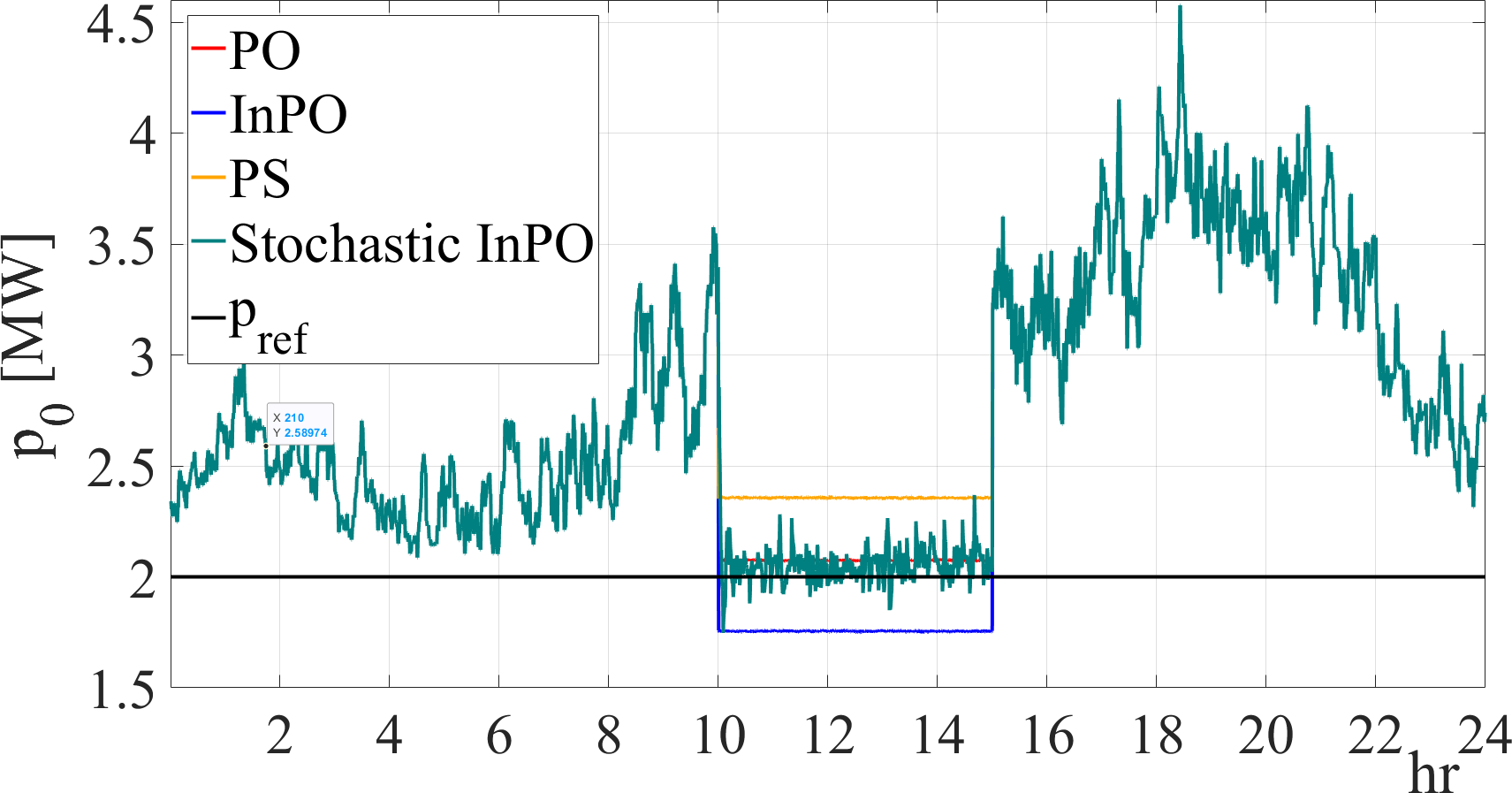}
\caption{Average power exchanged with the external network in the time-varying scenario.}
\label{fig:PO_MC_TVa}
\end{figure}

\begin{figure}[tb]
\centering	
\includegraphics[width=0.9\columnwidth]{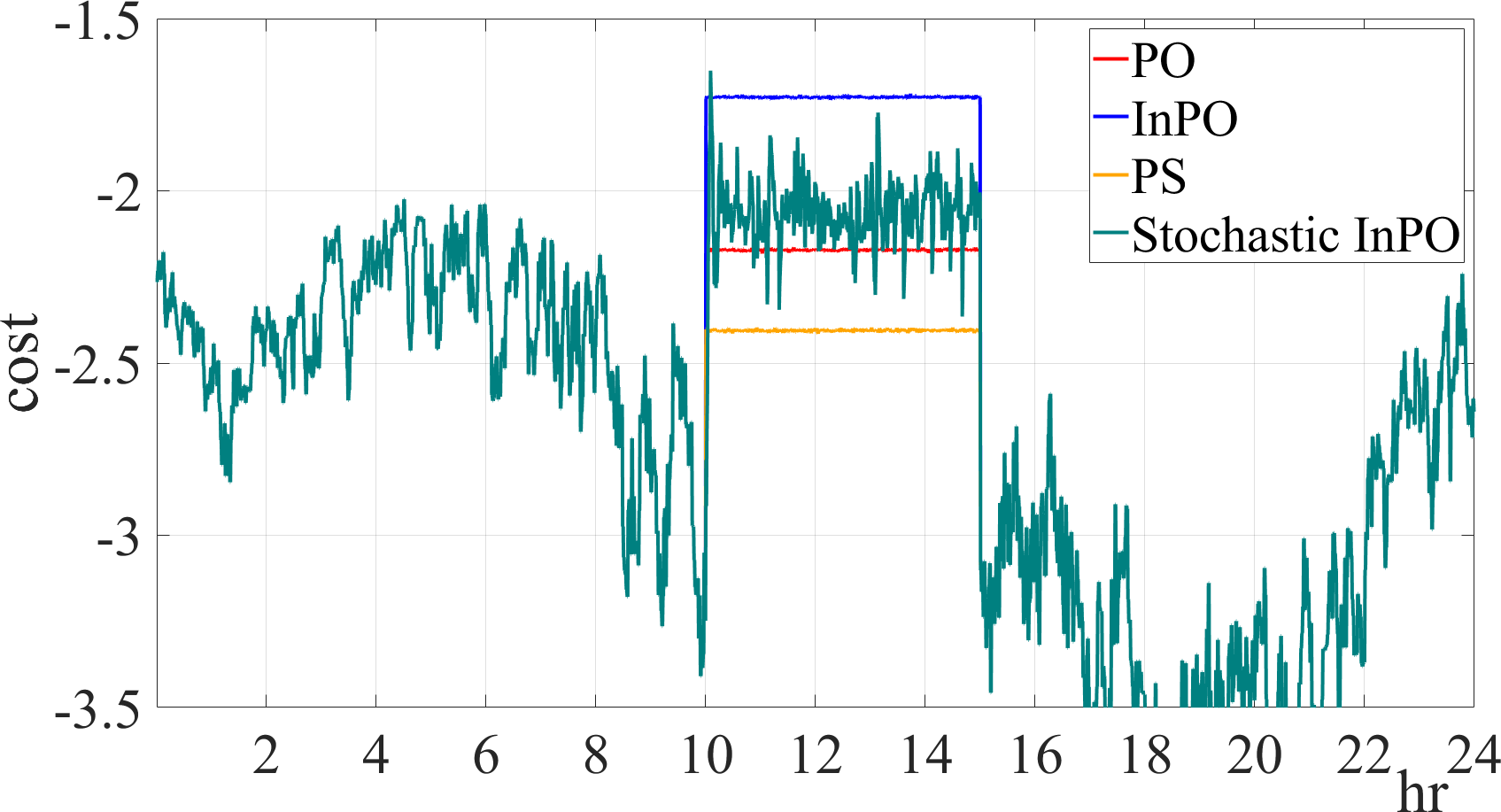}
\caption{Average operational cost in the time-varying scenario. }
\label{fig:Cost_MC_TV}
\vspace{-0.2cm}
\end{figure}

\section{Conclusion} \label{sec:conc}
This paper proposed a real-time price-based DR mechanism for regulating power exchanges in distribution networks under uncertain consumer behaviors. We proposed a solution that integrates real-time feedback from the grid to ensure convergence and constraint satisfaction even when user sensitivity estimates affected by errors.
We demonstrated  the effectiveness and robustness of the proposed method through  detailed case studies on the IEEE 37-bus feeder and comparison with model-based open-loop alternatives.
The results suggest that integrating feedback control and stochastic optimization represents a promising pathway for implementing responsive, grid-interactive energy systems.

\appendix
This Section provides the formal proofs of the results presented in Section~\ref{sec:control}.

\begin{IEEEproof}[Proof of Prop.~\ref{prop:static}]
The proof follows steps similar to~\cite{dall2016optimal}, but with key changes that pertains to the stochastic model and the errors in the gradient computation. Define the following map: 
\begin{align}
    \label{eq:approximatemap}
    \tilde{G}(\bz) := 
    \begin{bmatrix}
        \bg(\bx(t),\lambda(t)) \\
        -\xi(\bw(t)) - \eta
    \end{bmatrix}
\end{align}
which allows us to rewrite the stochastic iteration~\eqref{eq:PD_In_st} as: 
\begin{align}
\label{eq:PD_In_st_2}
\bz(t+1) = \text{proj}_{\mcZ} \left\{\bz(t) - \epsilon \tilde{G}(\bz(t)) \right\} \, .
\end{align}
In the following, we will find a bound on $e(t)$. Compute: 
\begin{align*}
    e(t+1) & = \Big\|\big[ \bz(t) - \epsilon \tilde{G}(\bz(t))\big]_{\mcZ} - \bz^\star \Big\| \\
    & = \Big \| \big[\bz(t) - \epsilon \tilde{G}(\bz(t))\big]_{\mcZ} - \big[\bz^\star - \epsilon G(\bz^\star)\big]_\mcZ \Big\| \\
    & \leq \|\bz(t) - \epsilon \tilde{G}(\bz(t)) - \bz^\star - \epsilon G(\bz^\star) \| \\
    & = \|\bz(t) - \epsilon \tilde{G}(\bz(t)) + \epsilon G(\bz(t)) \\
    & ~~~~~~~~~~~~~~~~~~~ - \epsilon G(\bz(t)) - \bz^\star - \epsilon G(\bz^\star)\|  \\
    & \leq \|\bz(t) - \epsilon G(\bz(t)) - \bz^\star - \epsilon G(\bz^\star)\| \\
    & ~~~~~~~~~~~~~~~~~~~ + \epsilon \| \tilde{G}(\bz(t)) - G(\bz(t))\| \\
    & \leq c(\epsilon) e_t + \epsilon \| \tilde{G}(\bz(t)) - G(\bz(t))\| 
\end{align*}
where $ c(\epsilon) := (1- 2 \epsilon \nu + \epsilon^2 L^2)^\frac{1}{2}$~\cite{koshal2011multiuser,dall2016optimal}.  Next, focus on $e_G(t) := \| \tilde{G}(\bz(t)) - G(\bz(t))\|$. Note that: 

\begin{small}
$$
e_G(t) = 
\left\|
\begin{bmatrix}
    2 (\hat{\bB} -\bB ) \bx(t) + (\pi - \pi_0)(\hat{\bbeta} - \bbeta) + \lambda(t) (\hat{\bbeta} - \bbeta) \\
        -\xi(\bw(t)) + \E_{\bw \sim \mcD(\bx(t))} \xi(\bw)  \, 
\end{bmatrix}
\right\| \, 
$$
\end{small}

Recall that for $\bz = [\bx^\top, \lambda ]^\top \in \mathbb{R}^{n+1}$, we have that $\|z\|_p = \big( \|x\|_p^p + |\lambda|^p \big)^{1/p}$, for any \(p\)-norm. Taking $p = 2$, we also have that $\|z\|_2 \;\leq\; \|x\|_2 + |\lambda|$. This is because, $\|z\|_2 = \sqrt{\|x\|_2^2 + |\lambda|^2} \leq \sqrt{\|x\|_2^2} + \sqrt{|\lambda|^2} = \|x\|_2 + |\lambda|$.  
Therefore, 
\begin{align*}
e_G(t) \leq & e_x(t) + e_\xi(\bx(t)) 
\end{align*}
where 
$$
e_x(t) := \|2 (\hat{\bB} -\bB ) \bx(t) + (\pi - \pi_0)(\hat{\bbeta} - \bbeta) + \lambda(t) (\hat{\bbeta} - \bbeta)\| 
$$
$$
e_\xi(\bx(t)) :=  \left|\xi(\bw(t)) - \E_{\bw \sim \mcD(\bx(t))} \xi(\bw) \right| \, .
$$
Note that: 
\begin{align*}
e_x(t)  & \leq \|\hat{\bbeta} - \bbeta\| (2 \|\bx(t)\| + |\pi - \pi_0| + |\lambda(t)|) \\
& \leq \|\hat{\bbeta} - \bbeta\| (2 X + |\pi - \pi_0| + \Lambda)
\end{align*}
where 
$$
X = \max_{\bx \in \mcX} \|\bx\| ,  \quad \Lambda = \max_{\lambda \in \mcY} |\lambda| \, .
$$
Note that $X$ and $\Lambda$ exists because $\mcX$ and $\mcY$ are compact. For brevity, define 
$$
B := 2 X + |\pi - \pi_0| + \Lambda \, .
$$
We can then bound $e(t+1)$ as: 
$$
 e(t+1) \leq c(\epsilon) e_t + \epsilon \|\hat{\bbeta} - \bbeta\| B + \epsilon e_\xi(\bx(t)) \, .
$$
Therefore, 
\begin{align}
    e(t+1) \leq & ~ c(\epsilon)^{t+1} e(0) + \epsilon \|\hat{\bbeta} - \bbeta\| B  \sum_{i = 0}^t c(\epsilon)^i  \nonumber \\
    &  + \epsilon \sum_{i = 0}^t c(\epsilon)^i \epsilon e_\xi(\bx(t-i)) \, . 
    \label{eq:proof_e1}
\end{align}

Then, using the Assumption~\ref{as:boundederror} in~\eqref{eq:proof_e1} and taking the expectation on both sides, we have that 
\begin{align}
    \E[e(&t+1)] \leq c(\epsilon)^{t+1} \E[e(0)] + \epsilon ( \|\hat{\bbeta} - \bbeta\| B  + \bar{e}_\xi )\sum_{i = 0}^t c(\epsilon)^i  \nonumber \\ \leq 
    & ~ c(\epsilon)^{t+1} \E[e(0)] + \epsilon ( \|\hat{\bbeta} - \bbeta\| B  + \bar{e}_\xi ) \frac{1 - c(\epsilon)^{t+1}}{1 - c(\epsilon)}
\end{align}
If $\epsilon < 2\nu/L^2$, then $c(\epsilon) < 1$ and the bound follows straightforwardly.  

\end{IEEEproof}

\begin{IEEEproof}[Proof of Prop.~\ref{prop:dynamic}]
The key steps are as follows: 
\begin{align*}
    e(t+1) & = \Big \| \big[\bz(t) - \epsilon \tilde{G}(\bz(t))\big]_{\mcZ} - \bz^\star(t+1) \Big\| \\
    & = \Big \|\big[\bz(t) - \epsilon \tilde{G}(\bz(t))\big]_{\mcZ} -  \bz^\star(t) + \bz^\star(t) - \bz^\star(t+1)\| \\
    & \leq \Big\| \big[\bz(t) - \epsilon \tilde{G}(\bz(t))\big]_{\mcZ} - \big[\bz^\star(t) - \epsilon \tilde{G}(\bz^\star(t))\big]_{\mcZ} \Big\| \\
    & \qquad + \|\bz^\star(t+1) - \bz^\star(t)\| \\
    & \leq \|\bz(t) - \epsilon \tilde{G}(\bz(t)) + \epsilon {G}(\bz(t)) - \epsilon {G}(\bz(t)) - \bz^\star(t) + \\
    & \qquad + \epsilon G(\bz^\star(t)) \| + \|\bz^\star(t+1) - \bz^\star(t)\| \\
    & \leq \|\bz(t) - \epsilon {G}(\bz(t)) - \bz^\star(t)  + \epsilon G(\bz^\star(t)) \| \\
    & \qquad + \epsilon \| \tilde{G}(\bz(t)) - {G}(\bz(t))\| + \|\bz^\star(t+1) - \bz^\star(t)\| \\
    & \leq c(\epsilon) e_t  + \|\bz^\star(t+1) - \bz^\star(t)\|  + \\
    & \qquad + \epsilon \| \tilde{G}(\bz(t)) - G(\bz(t))\|\, .
\end{align*}
By bounding $\|\bz^\star(t+1) - \bz^\star(t)\| $ with $\Delta$, using the bound for $\| \tilde{G}(\bz(t)) - G(\bz(t))\|$ derived in Proposition~\ref{prop:static}, and repeatedly applying the bound, the result follows. 

\end{IEEEproof}

\bibliographystyle{ieeetr}
\bibliography{Bib/myabrv,Bib/Bibliography}

\balance
% that's all folks
\end{document}